\documentclass[a4paper]{article}
\usepackage[a4paper,top=3cm,bottom=2cm,left=3.5cm,right=3.5cm,marginparwidth=4cm]{geometry}
\usepackage{amsmath,amsfonts}
\usepackage{algorithmic}
\usepackage{algorithm}
\usepackage{array}
\usepackage[caption=false,font=normalsize,labelfont=sf,textfont=sf]{subfig}
\usepackage{textcomp}
\usepackage{stfloats}
\usepackage{url}
\usepackage{verbatim}
\usepackage{graphicx}
\usepackage{cite}
\usepackage{authblk}
\usepackage{mathtools}

\usepackage{mathptmx}
\usepackage{amsthm,amssymb,latexsym}
\usepackage{bbm}
\usepackage{booktabs}
\usepackage{multirow}

\theoremstyle{plain}
\newtheorem{theorem}{Theorem}

\newtheorem{proposition}[theorem]{Proposition}

\theoremstyle{definition}

\newtheorem{definition}[theorem]{Definition}

\newtheorem{problem}{Problem} 

\newtheorem{formulation}{Formulation} 

\theoremstyle{remark}

\newcommand{\indeg}{\mathrm{indeg}}
\newcommand{\outdeg}{\mathrm{outdeg}}
\newcommand{\head}{\mathrm{head}}
\newcommand{\tail}{\mathrm{tail}}

\newif\ifhighlight
\highlighttrue 
\usepackage{xcolor}
\definecolor{mypink}{rgb}{0.9, 0.0, 0.4}
\definecolor{mygreen}{rgb}{0.0,0.5,0.0}
\definecolor{mypurple}{rgb}{0.6,0.0,0.6}
\definecolor{myfuchsia}{rgb}{0.9,0.0,0.9}
\definecolor{mybrown}{rgb}{0.7,0.3,0.3}
\definecolor{mygray}{rgb}{0.6,0.6,0.6}
\definecolor{myltgray}{rgb}{0.85,0.85,0.85}
\definecolor{stnavy}{HTML}{0D1CC9}
\definecolor{teal}{rgb}{0.0, 0.5, 0.5}
\ifhighlight 
\newcommand{\todo}[1]{\textcolor{red}{\small{[ToDo: #1]}}}
\newcommand{\suggested}[1]{\textcolor{myfuchsia}{\small{[Suggested: #1]}}}

\else 
\newcommand{\todo}[1]{} 
\newcommand{\suggested}[1]{} 
\fi

\begin{document}

\title{Finding Tree-Like Substructures in Phylogenetic Networks: ILP Approaches and Their Application}

\author[1]{Takatora Suzuki}
\affil[1]{Department of Applied Mathematics, Faculty of Science and Engineering, Waseda University, Tokyo 169-8555, Japan}

\maketitle

\begin{abstract}
Phylogenetic networks model evolutionary histories that involve reticulate events, but their structural complexity makes them difficult to interpret. 
Extracting their simple substructures both clarifies the evolutionary pathways and quantifies the complexity of the networks themselves. 
For a given rooted almost-binary phylogenetic network, the \textsc{Level Minimization} problem asks for a spanning subgraph that has the same root and leaf-set and whose level is minimum, i.e., which is as close to a tree as possible.
Networks for which the minimum level is zero are known as tree-based networks and can be recognized in linear time. 
However, \textsc{Level Minimization} is NP-hard in general. 
State-of-the-art algorithms rely on exhaustive searches of the solution spaces and hence apply only to networks of limited size. 
In this paper, we propose two methods for \textsc{Level Minimization} using integer linear programming: an exact formulation for finding such a subgraph of level at most one, and a heuristic formulation for the general case. 
Computational experiments confirmed the practicality of both formulations. 
An application to ancestral recombination graphs suggests that the minimum level provides an alternative measure of the topological complexity of an inferred network.
\end{abstract}

\section{Introduction}
Phylogenetic trees represent the branching evolutionary history among biological species, but cannot represent reticulate events such as horizontal gene transfer~\cite{szollHosi2015genome}, hybridization~\cite{goulet2017hybridization} and recombination. 
Phylogenetic networks extend trees so as to accommodate such events and have been used in a wide range of evolutionary analyses~\cite{BAPTESTE2013439,huson2011survey,kong2022classes}. 
This expressive power comes at the cost of structural complexity, which makes networks more difficult to interpret than trees. 
Extracting simple substructures from a network is a natural approach to making it easier to interpret.

An influential formalization of this approach is the notion of a \emph{support tree}, introduced by Francis and Steel~\cite{FrancisSteel-2015-WhichPhylogeneticNetworks}. 
A support tree of a phylogenetic network $N$ is a spanning tree of $N$ with the same root and leaf-set as $N$, and a network containing at least one support tree is called \emph{tree-based}. Tree-based networks have since been studied
extensively~\cite{Zhang-2016-TreeBasedPhylogeneticNetworks,fss2018,Non-binary-tree-based,francis2018tree}. 
A major development in this area was the structure theorem given by Hayamizu~\cite{Hayamizu-2021-StructureTheoremRooted}, which canonically decomposes a rooted almost-binary phylogenetic network~$N$ into certain subgraphs called \emph{maximal zig-zag trails} and characterizes the family of support trees of~$N$. This theorem led to optimal algorithms for a variety of computational problems related to support trees, such as recognition, counting, enumeration and optimization.

However, not every phylogenetic network is tree-based, and hence a support tree does not always exist. 
For such networks, a natural alternative is to extract a \emph{support network}, a spanning subgraph of $N$ with the same root and leaf-set as $N$, and to optimize its closeness to a tree. 
Here, the tree-likeness is often measured by the \emph{level} of a network (see, e.g.,\ \cite{CHOY200593,van2009uniqueness}), an index of how heavily its reticulations are entangled; in particular, level-$0$ networks are exactly the trees.
Suzuki and Hayamizu~\cite{SuzukiEtAl-2025-WhichPhylogeneticNetworks} formulated the \textsc{Level Minimization} problem: given a rooted almost-binary network $N$, find a support network of $N$ with the minimum level among all support networks of $N$.
This minimum value is called the \emph{base level} of $N$.
Building on a characterization of the family of all support networks of $N$ and its subfamilies,
they obtained an exact algorithm and a heuristic algorithm for \textsc{Level Minimization} by exhaustively searching such subfamilies.
However, both require exponential time in the worst case, so that they apply only to networks of limited size. 
The NP-hardness of \textsc{Level Minimization} was conjectured in~\cite{SuzukiEtAl-2025-WhichPhylogeneticNetworks} and then proved by Suzuki~\cite{suzuki2026hardness}.

In this paper, we propose two integer linear programming (ILP) formulations for extracting a support network of small level from a given rooted almost-binary phylogenetic network.
Both encode the structural results of~\cite{SuzukiEtAl-2025-WhichPhylogeneticNetworks} as linear constraints, so that a solver can prune the search space instead of enumerating it exhaustively.
The first is an exact formulation for the problem of finding a support network of level at most one. 
The second is a heuristic formulation for \textsc{Level Minimization}, which minimizes the overlaps among the reticulation cycles of a support network.

We evaluated the practicality of both formulations experimentally using the Gurobi Optimizer~\cite{gurobi}. On randomly generated networks, the first formulation scaled to inputs with up to $800$ reticulations, and the second was faster and far more stable than the exhaustive-search heuristic of~\cite{SuzukiEtAl-2025-WhichPhylogeneticNetworks} while returning support networks of almost the same level. 
We then applied our methods to two ancestral recombination graphs (ARGs)~\cite{wong2024general} inferred from the same \textit{Drosophila melanogaster} sequence data~\cite{kreitman1983nucleotide} by KwARG~\cite{ignatieva2021kwarg} and ARGweaver~\cite{rasmussen2014genome}. 
The two ARGs turned out to have different base levels, which suggests that the base level can be used as an alternative measure of the topological complexity of an inferred network.

The rest of the paper is organized as follows. 
Section~\ref{sec:prelim} gives graph theoretical terminology, defines phylogenetic networks, and recalls the structure theorem from~\cite{Hayamizu-2021-StructureTheoremRooted}. 
Section~\ref{sec:problem} states the problems of interest and briefly reviews relevant results, including the state-of-the-art algorithms in~\cite{SuzukiEtAl-2025-WhichPhylogeneticNetworks}. 
Section~\ref{sec:ILP} 
gives a characterization of level-$1$ networks (Theorem~\ref{prop:level-1 chara}) and  describes the proposed methods (Formulations~\ref{form:level-1 exact} and~\ref{form:level min}).
Section~\ref{sec:exp} presents our computational experiments, in which we evaluate the scalability of Formulation~\ref{form:level-1 exact} and compare Formulation~\ref{form:level min} with Algorithm~2 of~\cite{SuzukiEtAl-2025-WhichPhylogeneticNetworks}. 
Section~\ref{sec:exp3} applies both methods to ARGs inferred from empirical data. 
Section~\ref{sec:conclusion} concludes the paper and discusses directions for future work.

\section{Preliminaries}
\label{sec:prelim}
\subsection{Graph theoretical terminology}\label{subsec:graph}
Throughout this paper, all graphs are finite, simple (i.e.,\ having neither loops nor multiple edges), acyclic directed  graphs, unless otherwise stated. 
For a graph $G$, $V(G)$ and $E(G)$ denote the sets of vertices and edges of $G$, respectively. 
For two graphs $G$ and $H$, $G$ is a \emph{subgraph} of $H$ if both $V(G) \subseteq V(H)$ and $E(G) \subseteq E(H)$ hold, in which case we write $G \subseteq H$ or $H\supseteq G$. 
Two graphs $G$ and $H$ are \emph{isomorphic}, denoted by $G \cong H$, if there exists a bijection $\varphi : V(G) \to V(H)$ such that $(u,v) \in E(G)$ if and only if $(\varphi(u), \varphi(v)) \in E(H)$ for all $u, v \in V(G)$. A subgraph $G$ of $H$ is \emph{proper} if $G\neq H$. A subgraph  $G$ of $H$ is a \emph{spanning} subgraph of $H$ if $V(G)=V(H)$.

Given a graph $G$ and a non-empty subset $S \subseteq E(G)$, the edge-set $S$ is said to \emph{induce the subgraph  $G[S]$ of $G$}, that is, the one whose edge-set is $S$ and whose vertex-set is the set of ends of all edges in $S$. 
For a graph $G$ with $|E(G)| \geq 1$ and a partition  $\{E_1,\dots,E_{d}\}$ of $E(G)$, the collection  $\{G[E_1], \dots, G[E_{d}]\}$ is a  \emph{decomposition} of $G$, and $G$ is said to be \emph{decomposed into} $\{G[E_1], \dots, G[E_{d}]\}$. 

For an edge $e=(u,v)$ of a graph~$G$, $u$ and $v$ are denoted by $\tail(e)$ and $\head(e)$, respectively.
For vertices $u$ and $v$ of $G$, $u$ is a \emph{parent} of $v$, and $v$ is a \emph{child} of $u$, if $(u, v)\in E(G)$. 
For a vertex $v$ of $G$, we define $\delta^-_G(v) = \{e\in E(G)\mid head(e) = v\}$ and $\delta^+_G(v) = \{e\in E(G)\mid tail(e) = v\}$. 
The \emph{in-degree} (resp.\ \emph{out-degree}) of $v$ in $G$, denoted by $\indeg_G(v)$ (resp.\ $\outdeg_G(v)$), is defined by $|\delta^-_G(v)|$ (resp.\ $|\delta^+_G(v)|$). 
Also, the \emph{degree} of $v$ in $G$, denoted by $\deg_G(v)$, is the sum of the in-degree and the out-degree of $v$ in $G$. 
For a graph~$G$, a vertex $v\in V(G)$ is called a \emph{root} of $G$ if $\indeg_G(v) = 0$; a \emph{leaf} of $G$ if $(\indeg_G(v), \outdeg_G(v)) = (1,0)$; 
and a \emph{reticulation} of $G$ if $\indeg_G(v) = 2$. 

Let $G$ be a graph. \emph{Subdividing} an edge $(u,v)$ of $G$ means replacing it with a directed path from $u$ to $v$ of length at least two. \emph{Smoothing} a vertex $v$ with $\indeg_G(v)=\outdeg_G(v)=1$ means suppressing $v$ from $G$, which is the reverse operation of edge subdivision.
A graph $G'$ is called a \emph{subdivision} of $G$ if $G'$ is obtained from $G$ by applying edge subdivisions zero or more times.

For a directed graph $G$, the \emph{underlying undirected graph} of $G$, denoted by $\tilde{G}$, is the undirected graph obtained from $G$ by replacing each edge $(u,v)$ with the undirected edge $\{u,v\}$.
An undirected graph is \emph{connected} if there is a path between every pair of vertices. 
A directed graph is \emph{connected} if its underlying undirected graph is connected.
For a connected simple undirected graph $G$, a \emph{cut vertex} of $G$ is a vertex whose removal disconnects $G$. 
A subgraph $H$ of $G$ is called a \emph{block} of $G$ if $H$ is a maximal connected subgraph of $G$ that contains no cut vertex of $H$. 
In this paper, a \emph{block} of a directed graph~$G$ refers to a block of $\tilde{G}$.
A block is \emph{trivial} if it consists of a single edge or a single vertex; otherwise, it is \emph{non-trivial}.
Let $H$ be a subgraph of a graph $G$ such that $\tilde{H}$ is a cycle. If $H$ has a unique vertex $v$ with $\indeg_H(v) =2$, we call $H$ a \emph{reticulation cycle}, or a \emph{$v$-cycle}, of~$G$.

\subsection{Phylogenetic networks}\label{subsec:phylo net}
Throughout this paper, $X$ represents a non-empty finite set, which can be interpreted as a set of present-day species.
\begin{definition}
  A \emph{rooted almost-binary phylogenetic network (on a leaf-set $X$)} is defined to be a finite simple directed acyclic graph $N = (V,E)$ with the following properties (P1)--(P3):
\begin{description}
    \item[(P1)] $N$ has a unique root $\rho$ and $\rho$ has out-degree 1 or 2;
    \item[(P2)] the set of leaves of $N$ is identical to $X$; and
    \item[(P3)] every vertex $v\in V\setminus (X\cup \{\rho\})$ satisfies $\indeg_N(v)\in \{1,2\}$ and $\outdeg_N(v)\in \{1,2\}$.
\end{description}
A rooted almost-binary phylogenetic network satisfying (P3') below is called a rooted \emph{binary} phylogenetic network.
\begin{description}
  \item[(P3')] every vertex $v\in V\setminus (X\cup \{\rho\})$ satisfies $(\indeg_N(v),\outdeg_N(v)) \in \{(1,2),(2,1)\}$.
\end{description}
\end{definition}

Let $N$ be a rooted almost-binary phylogenetic network on~$X$ and let $G$ be a spanning subgraph of $N$.
If $G$ is a rooted almost-binary phylogenetic network on $X$, we call $G$ a \emph{support network} of~$N$. 
A support network $G$ of $N$ is called \emph{minimal} if no support network of $N$ is its proper subgraph, and is called \emph{minimum} if it has the minimum number of edges among all support networks of $N$.
A support network $G$ of $N$ is called a \emph{support tree} if $G$ contains no reticulation.
If $N$ has a support tree, $N$ is called \emph{tree-based}.

Since every support network of $N = (V, E)$ has the same vertex set $V$, we shall identify a support network $G = (V, E')$ of $N$ with its edge-set $E' \subseteq E$ throughout this paper. 
We denote the families of all, minimal, and minimum support networks of $N$ by $\mathcal{A}_N, \mathcal{B}_N$ and $\mathcal{C}_N$, respectively.
We note that a support network of $N$ always exists since $N$ itself is a support network of $N$.

For a directed graph $G$, 
$\mathrm{level}(G)$ denotes the \emph{level} of $G$, i.e., the maximum number of reticulations contained in a block of $G$. 
The \emph{base level} of a rooted almost-binary phylogenetic network $N$, denoted by $\mathrm{level}^\ast(N)$, is the minimum value of $\mathrm{level}(G)$ over all support networks $G$ of $N$. 
If $k\geq 0$ is the base level of $N$, then $N$ is called \emph{level-$k$-based}. 
We note that $N$ is tree-based if and only if $N$ is level-$0$-based, since level-$0$ phylogenetic networks are exactly the phylogenetic trees.

\subsection{Structure theorem for rooted almost-binary phylogenetic networks}\label{subsec:structure theorem}
Here we recall relevant materials from~\cite{Hayamizu-2021-StructureTheoremRooted}.
Let $N$ be a rooted almost-binary phylogenetic network on $X$. 
A connected subgraph $Z$ of $N$ with $m\geq 1$ edges is called a \emph{zig-zag trail} if there exists a permutation $(e_1, \dots, e_m)$ of $E(Z)$ such that either $\head(e_i) = \head(e_{i+1})$ or $\tail(e_i) = \tail(e_{i+1})$ holds for all $i\in[1, m-1]$. 
In this paper, we often identify $Z$ with the permutation $(e_1, \dots, e_m)$.
A zig-zag trail $Z$ is \emph{maximal} if no zig-zag trail contains $Z$ as a proper subgraph.
We write each edge $(v_i, v_j)$ as $v_i > v_j$ or $v_j < v_i$ to concisely express $Z$ as $v_0 > v_1 < v_2 \dots > v_{m-1} < v_m$ or its reverse. 

Every maximal zig-zag trail is classified into one of the four types. 
A maximal zig-zag trail $Z$ is called a \emph{crown} if $Z$ has an even number $m$ of edges and can be written in the cyclic form $v_0 > v_1 < \dots > v_{m-1} < v_m = v_0$.
A \emph{fence} is a maximal zig-zag trail that is not a crown. 
An \emph{M-fence} is a fence with an even number $m$ of edges and of the form $v_0 < v_1 > \dots < v_{m-1} > v_m \neq v_0$. 
A \emph{W-fence} is a fence with an even number $m$ of edges and of the form $v_0 > v_1 < \dots > v_{m-1} < v_m \neq v_0$. 
An \emph{N-fence} is a fence with an odd number $m$ of edges and of the form $v_0 > v_1 < \dots < v_{m-1} > v_m$. 

As in the approach taken in~\cite{Hayamizu-2021-StructureTheoremRooted}, we often consider a maximal zig-zag trail $Z$ as a sequence $(e_1,\dots,e_{|E(Z)|})$ of edges, ordered according to their appearance in the trail.
For any maximal zig-zag trail $Z = (e_1, \dots,  e_{|E(Z)|})$ in $N$, any subset $S$ of $E(Z)$  is specified by a $0$-$1$ sequence $\langle b_1\; b_2\; \dots \; b_{|E(Z)|}\rangle$, where $b_i = 1$ if $e_i \in S$  and $b_i = 0$ otherwise. 

Using the terminology introduced above, we now state the part of the structure theorem of Hayamizu~\cite{Hayamizu-2021-StructureTheoremRooted} that we use in this paper.
\begin{theorem}%
  [\cite{Hayamizu-2021-StructureTheoremRooted}, see Figure~\ref{fig:zig-zag} for example]
  \label{thm:zig-zag trail decomposition}
  Let $N$ be a rooted almost-binary phylogenetic network on $X$. Then, there exists a unique decomposition $\mathcal{Z} = \{Z_1, \dots, Z_d\}$ of $N$, where each $Z_i$ is a maximal zig-zag trail of $N$. Moreover, this decomposition can be computed in $\Theta(|E(N)|)$ time. 
\end{theorem}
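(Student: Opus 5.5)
The plan is to define an equivalence relation on $E(N)$ whose classes are exactly the edge-sets of the maximal zig-zag trails, and to use the degree bounds of (P1) and (P3) to show that each class induces a trail. Declare two edges $e,f$ \emph{adjacent} if $e\neq f$ and either $\head(e)=\head(f)$ or $\tail(e)=\tail(f)$, and let $\sim$ be the reflexive--transitive closure of adjacency. Then $\sim$ is an equivalence relation on $E(N)$, and its classes $E_1,\dots,E_d$ partition $E(N)$. Setting $Z_i=N[E_i]$ gives a decomposition in the sense of Section~\ref{subsec:graph}. It remains to show that each $Z_i$ is a maximal zig-zag trail and that the decomposition is unique.

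The key step is that the adjacency graph $\Lambda$ (vertices $E(N)$, edges between adjacent pairs) has maximum degree at most $2$. Any edge $e=(u,v)$ can share its tail only with the other edge leaving $u$, because $\outdeg_N(u)\le 2$ by (P1) and (P3). Likewise, it can share its head only with the other edge entering $v$, because $\indeg_N(v)\le 2$ and leaves and the root have in-degree at most one. Hence every connected component of $\Lambda$ is a path or a cycle. Ordering the component's vertices along this path or cycle yields a permutation $(e_1,\dots,e_m)$ of $E_i$ in which consecutive edges share a head or a tail, so $Z_i$ is a zig-zag trail. The graph $Z_i$ is connected because consecutive edges share an endpoint. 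For maximality, any zig-zag trail $Z\supseteq Z_i$ has all its edges $\sim$-equivalent to each other, since consecutive edges in its permutation are adjacent. Therefore $E(Z)\subseteq E_i$, and so $Z=Z_i$.

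For uniqueness, take any decomposition of $N$ into maximal zig-zag trails. Each of its trails has an edge-set contained in a single class $E_i$, by the argument just given, and is therefore a subgraph of $Z_i$. By maximality each trail equals $Z_i$, so the decomposition coincides with $\{Z_1,\dots,Z_d\}$. I expect the main subtlety to be this step, together with the precise claim that a component of $\Lambda$ that is a cycle gives a crown. In that case we must check that head-sharing and tail-sharing alternate, which holds because an edge has only one head-partner and only one tail-partner. We also need to confirm that the permutation is genuinely a zig-zag ordering as required by the definition. Neither point is hard once the degree-$\le 2$ property is established.

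For the $\Theta(|E(N)|)$ bound, first build $\Lambda$ in linear time: for each vertex, pair up its at most two in-edges and its at most two out-edges. Then find the components of $\Lambda$ by a BFS or DFS that walks along each path or cycle, which takes $O(|E(N)|)$ time because $\Lambda$ has at most $|E(N)|$ edges. The lower bound is immediate, since every edge must be output.
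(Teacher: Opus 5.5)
The paper does not prove this theorem; it quotes it from Hayamizu~\cite{Hayamizu-2021-StructureTheoremRooted}, so there is no in-paper argument to compare yours against. Your proof is correct and self-contained.

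Its key step is the observation that the ``shares a head or a tail'' graph $\Lambda$ on $E(N)$ has maximum degree at most two. This holds because, by (P1)--(P3), each edge has at most one head-partner and at most one tail-partner. Once you have this, everything else follows from the path/cycle structure of a graph of maximum degree two:
\begin{itemize}
\item existence, since each $\sim$-class is traversed by a path or cycle of $\Lambda$;
\item maximality and uniqueness, since every zig-zag trail lies inside a single $\sim$-class;
\item the linear-time bound, since it reduces to a component search in $\Lambda$.
\end{itemize}

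Any proof, including the cited one, must rest on these degree bounds. With a vertex of in-degree three, maximal zig-zag trails can overlap and no decomposition need exist. So your route is best seen as a clean repackaging of the standard idea rather than a different one. What it buys is transparency and a little generality: you use only simplicity and the in- and out-degree bounds, never acyclicity.

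Two small points to tighten.
\begin{itemize}
\item In the maximality step you conclude $Z=Z_i$, and in the uniqueness step that each part equals some $Z_i$. Both also need equality of vertex sets. This holds because a connected subgraph with at least one edge has no isolated vertices, and parts of a decomposition are edge-induced by definition.
\item Your discussion of crowns and of alternation between head-sharing and tail-sharing is not needed here. The statement says nothing about the types of the $Z_i$, so this is not a subtlety of the proof.
\end{itemize}
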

By the uniqueness in Theorem~\ref{thm:zig-zag trail decomposition}, we may call the decomposition $\mathcal{Z}$ therein the \emph{maximal zig-zag trail decomposition} of $N$. 
We denote by $\mathcal{Z}_c\subset \mathcal{Z}$ and $\mathcal{Z}_f\subseteq\mathcal{Z}$ the sets of all crowns and of all fences of $N$, respectively.

\begin{figure}[t]
  \centering
  \includegraphics[width=0.45\textwidth]{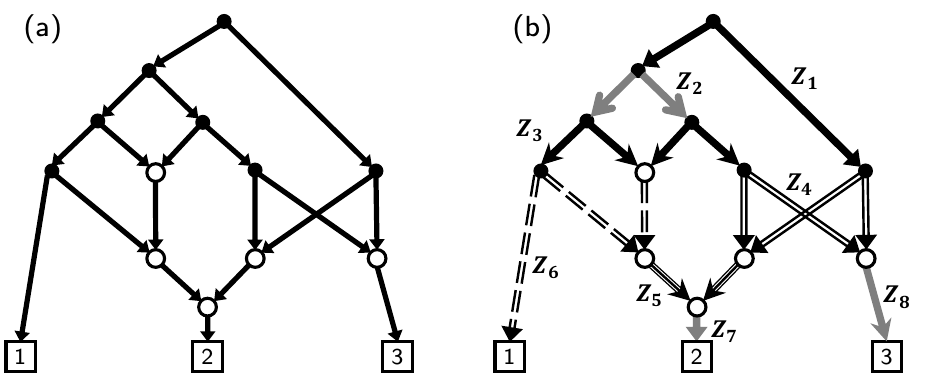}
  \caption{(a) A rooted binary phylogenetic network~$N$ on $X = \{1, 2, 3\}$. (b) The maximal zig-zag trail decomposition~$\mathcal{Z} = \{Z_1, \dots, Z_8\}$ of $N$, where $Z_1, Z_2$ and $Z_3$ are M-fences, $Z_4$ is a crown, $Z_5$ is a W-fence, and $Z_6, Z_7$ and $Z_8$ are N-fences. Each maximal zig-zag trail is depicted in different arrow styles.}
  \label{fig:zig-zag}
\end{figure}

\section{Problems and known results}\label{sec:problem}
In this section, we formally describe the problems that we focus on and summarize what is known about them.
\begin{problem}[\textsc{Level Minimization},~\cite{SuzukiEtAl-2025-WhichPhylogeneticNetworks}]
  \label{prob:level min}
  Given a rooted almost-binary phylogenetic network~$N$ on $X$, output a value $k = \mathrm{level}^\ast(N)$ and a support network of $N$ of level $k$.
\end{problem}
The following Problem~\ref{prob:find level-1} may be viewed as a partial version of Problem~\ref{prob:level min}.
\begin{problem}
  \label{prob:find level-1}
  Given a rooted almost-binary phylogenetic network~$N$ on $X$, output a support network $G\in \mathcal{A}_N$ with $\mathrm{level}(G)\leq 1$ if it exists, and `NO' otherwise.
\end{problem}
Suzuki~\cite{suzuki2026hardness} has shown that both problems are NP-hard; this intractability motivates the present work.
However, recall that $N$ is tree-based if and only if $\mathrm{level}^\ast(N)=0$. 
The following proposition, together with Theorem~\ref{thm:zig-zag trail decomposition}, yields an $O(|E(N)|)$-time algorithm for deciding whether $\mathrm{level}^\ast(N) = 0$ holds. 
Equivalent results appear in~\cite{Zhang-2016-TreeBasedPhylogeneticNetworks,Non-binary-tree-based}.

\begin{proposition}[\cite{Zhang-2016-TreeBasedPhylogeneticNetworks,Non-binary-tree-based,Hayamizu-2021-StructureTheoremRooted}]\label{prop:tree-based}
  Let~$N$ be a rooted almost-binary phylogenetic network on $X$ and let $\mathcal{Z}$ be the maximal zig-zag trail decomposition of $N$. 
  Then, $N$ is tree-based if and only if no element of $\mathcal{Z}$ is a W-fence. 
\end{proposition}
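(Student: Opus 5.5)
The plan is to reduce tree-basedness to a local edge-selection problem on each maximal zig-zag trail, using the decomposition $\mathcal{Z}$ of Theorem~\ref{thm:zig-zag trail decomposition}.

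First, I will reformulate support trees as edge-selections. A spanning subgraph $G\subseteq N$ with the same root and leaf-set $X$ is a support tree exactly when three conditions hold. Every non-root vertex keeps in-degree exactly $1$ in $G$. Every non-leaf vertex keeps out-degree at least $1$ in $G$, so that no new leaves appear; the root keeps out-degree $\ge 1$. Every leaf keeps its unique incoming edge.

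Given in-degree exactly $1$ at every non-root vertex and acyclicity, $G$ is automatically a spanning arborescence rooted at $\rho$. This holds because following parents backwards from any vertex must terminate, and it can only terminate at $\rho$. Leaves are then exactly the vertices of out-degree $0$, which equal $X$. After smoothing, (P3) holds for $G$, because in-degrees are $\le 1$ and out-degrees lie in $\{1,2\}$ for internal vertices. So the problem is: choose $S\subseteq E(N)$ such that each reticulation retains exactly one of its two incoming edges, every tree-vertex's incoming edge is kept, and each vertex of out-degree $\ge 1$ retains at least one outgoing edge.

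Second, I will localize these constraints to trails. By the definition of zig-zag trails and maximality, the two edges entering a reticulation lie in the same maximal trail, consecutive with shared head. Likewise, the two edges leaving a vertex of out-degree $2$ lie in the same trail with shared tail. Since $\mathcal{Z}$ partitions $E(N)$, the constraints decouple into independent constraints on each $Z=(e_1,\dots,e_m)$. Write $S\cap E(Z)$ as a $0$-$1$ sequence $\langle b_1\,\dots\,b_m\rangle$. Each consecutive pair sharing a head imposes $b_i+b_{i+1}=1$; an endpoint edge whose head is a tree-vertex or leaf (in-degree $1$) imposes $b=1$. Each consecutive pair sharing a tail imposes $b_i+b_{i+1}\ge 1$; an endpoint edge whose tail has out-degree $1$ imposes $b=1$.

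Third, I will do the case analysis over the four types. For an M-fence $v_0<v_1>\dots$, the end edges have heads of in-degree $1$ and so are forced to $1$; the sequence $\langle 1\,0\,1\,0\cdots 1\rangle$ then works. For a crown, the alternating sequence works cyclically. For an N-fence, one end is a forced-$1$ head end and the other a free tail end, so the alternating sequence starting with $1$ at the head end satisfies all constraints.

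For a W-fence $v_0>v_1<v_2>\dots<v_m$, both end edges have tails of out-degree $1$, so $b_1=b_m=1$. The in-constraints at $v_1,v_3,\dots,v_{m-1}$ then force strict alternation $b_{2j}=1-b_{2j-1}$. The out-constraints only forbid two consecutive zeros, and together with the alternation they force the pattern to propagate: $b_1=1\Rightarrow b_2=0\Rightarrow b_3=1\Rightarrow\cdots$, ending with $b_m=0$ since $m$ is even. This contradicts $b_m=1$, so a W-fence admits no valid selection. Combining the per-trail selections gives a support tree exactly when no W-fence exists.

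The main obstacle is the parity bookkeeping in the W-fence case, in particular checking that the out-constraint $b_{2j}+b_{2j+1}\ge1$ really forces $b_{2j+1}=1$ after $b_{2j}=0$. A second point to handle is making sure the endpoint degrees of each fence type are exactly as claimed; this follows from maximality, since an endpoint's other incident edge, if it existed, would extend the trail.
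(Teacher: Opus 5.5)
Your strategy is right and is essentially the structure-theorem argument the paper cites from Hayamizu; the paper itself gives no proof. You encode a support tree as an edge set that keeps exactly one in-edge at each non-root vertex and at least one out-edge at each non-leaf. You then observe, via maximality and the degree bounds, that every such constraint lives inside a single maximal zig-zag trail, and analyse the four trail types. Your W-fence argument is correct. Equivalently: the $m/2$ reticulations $v_1,v_3,\dots,v_{m-1}$ keep exactly $m/2$ edges of the trail, while the $m/2+1$ vertices $v_0,v_2,\dots,v_m$ have all their out-edges inside the trail and each needs one of them kept.

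The existence step for M-fences, however, is wrong as written. For $v_0<v_1>\dots<v_{m-1}>v_m$ the number $m$ is even, so an alternating sequence starting with $1$ must end with $0$; the sequence $\langle 1\,0\,1\,0\cdots 1\rangle$ does not exist. The alternating choice $\langle 1\,0\cdots 1\,0\rangle$ would delete the unique in-edge of $v_m$ and make $v_m$ a second root. The count shows what is needed. The trail has only $m/2-1$ internal reticulations $v_2,\dots,v_{m-2}$, each contributing exactly one kept edge, plus the two forced end edges. So every valid selection keeps $m/2+1$ edges, and exactly one tail-sharing pair $(e_{2j-1},e_{2j})$ must be $\langle 1\,1\rangle$. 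A correct witness is $b_{2j-1}=1$ for all $j$, $b_{2j}=0$ for $j<m/2$, and $b_m=1$, i.e.\ $\langle 1\,0\,1\,0\cdots 1\,0\,1\,1\rangle$, which is just $\langle 1\,1\rangle$ when $m=2$.

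Two smaller corrections. First, the tail end of an N-fence is not free. By the same maximality argument you use for W-fences, $\outdeg_N(v_0)=1$, so $b_1=1$ is forced as well. Your alternating sequence still works only because $m$ is odd and it has $1$ at both ends, and you should say so. Second, drop ``after smoothing''. A support tree must itself be a spanning subgraph of $N$, and (P3) for almost-binary networks already permits vertices with in- and out-degree one, so no smoothing is needed or allowed.
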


For the intractable cases, the state-of-the-art algorithms for Problem~\ref{prob:level min} rest on the structural results of Suzuki and Hayamizu~\cite{SuzukiEtAl-2025-WhichPhylogeneticNetworks}, who characterized each of $\mathcal{A}_N$, $\mathcal{B}_N$ and $\mathcal{C}_N$ as a direct product over the maximal zig-zag trails of $N$. We recall only the characterization of $\mathcal{B}_N$, which we use in Section~\ref{sec:ILP}, and refer the reader to~\cite{SuzukiEtAl-2025-WhichPhylogeneticNetworks} for the other two.
\begin{proposition}%
  [\cite{SuzukiEtAl-2025-WhichPhylogeneticNetworks}]
  \label{thm:bijection.ABC}
  Let $N$ be a rooted almost-binary phylogenetic network on~$X$ and let $\mathcal{Z} = \{Z_1, \dots, Z_d\}$ be the maximal zig-zag trail decomposition of $N$.
  Then the set $\mathcal{B}_N$ is characterized by $\mathcal{B}_N = \prod_{i=1}^d \mathcal{S}_{\mathcal{B}} (Z_i)$, where $(Z_1, \dots, Z_d)$ is an arbitrary ordering of~$\mathcal{Z}$ and each $\mathcal{S}_{\mathcal{B}} (Z_i)$ is expressed by~\eqref{eq:sequence for minimal support network}. Here, we let $m_i = |E(Z_i)|$.
  \begin{align}
  \mathcal{S}_{\mathcal{B}}(Z_i) = 
  \begin{cases}
    \left\{ \langle b_1\cdots b_{m_i}\rangle \;\middle|\; 
      \begin{array}{@{}l@{}}
        b_1 = b_{m_i} = 1, \text{ and} \\
        \text{no $\langle 00\rangle$ or $\langle 111\rangle$ occurs}
      \end{array}
    \right\}    \hfill \forall\  Z_i\in \mathcal{Z}_f \\[3ex]
    \left\{ \langle b_1 \cdots b_{m_i}\rangle \;\middle|\; 
      \begin{array}{@{}l@{}}
        \text{no $\langle 00\rangle$ or $\langle 111 \rangle$ occurs} \\
        \text{in any circular ordering}
      \end{array}
    \right\} \hfill \forall\  Z_i\in \mathcal{Z}_c
  \end{cases} \label{eq:sequence for minimal support network}
\end{align}
\end{proposition}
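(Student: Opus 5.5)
This is the characterization of minimal support networks from \cite{SuzukiEtAl-2025-WhichPhylogeneticNetworks}. I would derive it by characterizing support networks first, then passing to minimal ones.

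\textbf{Step 1: local conditions for a support network.} A spanning subgraph $G=E'\subseteq E(N)$ is a support network if and only if two local conditions hold.
\begin{itemize}
\item Every non-root vertex keeps in-degree at least one, and every non-leaf vertex keeps out-degree at least one.
\item $G$ has no new leaves and no new roots; the first condition already guarantees this.
\end{itemize}
Acyclicity and uniqueness of the root are inherited. Every vertex reaches $\rho$ backwards by following retained in-edges, since $N$ is acyclic and finite. Degree bounds (P3) are inherited, and in-degree or out-degree zero at internal vertices is excluded. So membership in $\mathcal{A}_N$ reduces to: for each vertex $v\neq\rho$, $\delta^-_N(v)\cap E'\neq\emptyset$, and for each $v\notin X$, $\delta^+_N(v)\cap E'\neq\emptyset$.

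\textbf{Step 2: split the conditions over the trails.} By construction of the maximal zig-zag trail decomposition (Theorem~\ref{thm:zig-zag trail decomposition}), the in-edge set $\delta^-_N(v)$ of each vertex lies inside a single maximal zig-zag trail. The same holds for each $\delta^+_N(v)$, since two edges sharing a head or a tail are consecutive in some zig-zag trail. Hence each condition involves only one $Z_i$, and $\mathcal{A}_N$ is a direct product of per-trail constraints.

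\textbf{Step 3: translate into conditions on $\langle b_1\cdots b_{m_i}\rangle$.} Consecutive edges $e_j,e_{j+1}$ share a head or a tail. If the shared vertex has degree two on that side, the condition says $b_j=1$ or $b_{j+1}=1$, i.e.\ no $\langle 00\rangle$. For a fence, the end edges $e_1$ and $e_{m_i}$ are the sole in-edge or out-edge of their endpoint on that side. This holds by maximality, using the type analysis in \cite{Hayamizu-2021-StructureTheoremRooted}, and the almost-binary degree bounds with $\rho$ and the leaves handled appropriately. So the end edges are forced to $1$, except in the W-fence case, where the ends are out-edges of a degree-$(\cdot,1)$ situation; I would check these boundary cases carefully. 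For a crown the constraints are cyclic: no $\langle 00\rangle$ circularly.

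\textbf{Step 4: pass to minimality.} Because of the product structure, $G$ is minimal exactly when each per-trail sequence is minimal among those satisfying the per-trail constraints. Deleting a $1$ at position $j$ keeps feasibility if and only if it creates no $\langle 00\rangle$ and does not violate an end condition. That happens exactly when both neighbours are $1$ (a $\langle 111\rangle$ around $j$), or when $j$ is an end position with a forced-$1$ end, which the end condition excludes anyway. An interior end deletion, such as deleting $b_2$ with $b_1=b_3=1$, is again the $\langle 111\rangle$ pattern. So minimal sequences are exactly those with no $\langle 00\rangle$ and no $\langle 111\rangle$, linearly with $b_1=b_{m_i}=1$ for fences and circularly for crowns. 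This gives \eqref{eq:sequence for minimal support network}.

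\textbf{Main obstacle.} The main obstacle is Step 3's boundary analysis for the four fence types. I must verify from maximality that the end vertices of a fence contribute a singleton in- or out-set, so the end edge is forced. In particular, a W-fence end vertex with in-degree one must keep its unique in-edge. I would check each fence type explicitly against (P1)--(P3).
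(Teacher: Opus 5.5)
Your overall route is the right one and is in line with the cited work of Suzuki and Hayamizu, which describes $\mathcal{A}_N$, $\mathcal{B}_N$ and $\mathcal{C}_N$ all as products over the trails; the present paper only quotes the result. The route is: reduce membership in $\mathcal{A}_N$ to ``every non-root vertex keeps an in-edge and every non-leaf vertex keeps an out-edge'', split these constraints over the maximal zig-zag trails, then take minimal elements of the product.

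\textbf{The gap is Step~3.} You leave it open, and your hedge points in a direction that contradicts the statement. You suggest the end edges of a W-fence might not be forced, and later attribute the forcing to an end vertex of in-degree one. Neither is correct. The missing argument is short and the same for all fence types.
\begin{itemize}
\item An edge $e=(u,v)$ has at most one partner sharing its head, present iff $\indeg_N(v)=2$, and at most one sharing its tail, present iff $\outdeg_N(u)=2$.
\item Let $e_1=(u,v)$ be an end edge of a fence $Z$ with $m_i\geq 2$. Its only partner is $e_2$: a second partner would either extend $Z$, contradicting maximality, or close $Z$ into a crown.
\item If $e_1$ and $e_2$ share a head, then $\outdeg_N(u)=1$ and $u\notin X$. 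So $e_1$ is the only edge that can give $u$ (root or not) positive out-degree in $G$.
\item If they share a tail, then $\indeg_N(v)=1$ and $v\neq\rho$. So $e_1$ is the only possible in-edge of $v$.
\item Either way $b_1=1$. For $m_i=1$ both cases apply.
\end{itemize}
A W-fence $v_0>v_1<\dots>v_{m-1}<v_m$ falls under the first case. There $v_0$ and $v_m$ are tails of out-degree one, and their out-degree requirement forces $e_1$ and $e_{m_i}$.

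You also need the converse, which you only use tacitly. Every interior fence edge and every crown edge has both partners, so its head has in-degree two and its tail out-degree two. Hence the fence ends are the only forced edges, and crowns carry exactly the circular no-$\langle 00\rangle$ constraints.

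\textbf{A smaller point in Step~4.} Testing minimality by single-edge deletions is valid only because each per-trail family is closed under changing $0$s to $1$s. Doing so creates no $\langle 00\rangle$ and keeps $b_1=b_{m_i}=1$, so any feasible proper subset yields a feasible single deletion; you should state this. Your sentence listing an end position as a case where deletion ``keeps feasibility'' is also backwards. Deleting a forced end is never allowed. The deletable positions are exactly the centres of occurrences of $\langle 111\rangle$, linearly for fences and circularly for crowns. With these two repairs the argument is complete.
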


They also proved that searching $\mathcal{B}_N$ suffices for Problem~\ref{prob:level min}.

\begin{proposition}[\cite{SuzukiEtAl-2025-WhichPhylogeneticNetworks}]
  \label{lem:BN search correctness}
  For any rooted almost-binary phylogenetic network $N$ on $X$, there exists at least one element $G\in \mathcal{B}_N$ that satisfies $\mathrm{level}(G) = \mathrm{level}^\ast(N)$.
\end{proposition}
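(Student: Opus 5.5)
The plan is to show that, starting from a support network realizing the base level, we can remove edges until it is minimal without ever increasing the level. Concretely, let $G^\ast\in\mathcal{A}_N$ satisfy $\mathrm{level}(G^\ast)=\mathrm{level}^\ast(N)$; such a $G^\ast$ exists because $\mathcal{A}_N$ is finite and non-empty, as $N\in\mathcal{A}_N$. Since $\mathcal{A}_N$ is finite and ordered by inclusion, there is a minimal support network $G\in\mathcal{B}_N$ with $G\subseteq G^\ast$. Indeed, we can delete edges one at a time while staying inside $\mathcal{A}_N$ until no further deletion is possible. It then suffices to prove the monotonicity claim: if $G\subseteq G'$ are both support networks of $N$ (spanning, same vertex set), then $\mathrm{level}(G)\le \mathrm{level}(G')$. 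Given this, $\mathrm{level}^\ast(N)\le\mathrm{level}(G)\le\mathrm{level}(G^\ast)=\mathrm{level}^\ast(N)$, so $G$ attains the base level.

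To prove monotonicity, take a block $B$ of $G$ that contains $r$ reticulations of $G$. Since $G$ is connected (it is a rooted network with a unique root and every vertex reachable from it), the block $B$ lies inside some block $B'$ of $G'$. This is the standard fact that a 2-connected subgraph, or a single edge, of $\tilde G\subseteq\tilde{G'}$ lies in a single block of $\tilde{G'}$. Every reticulation $v$ of $G$ lying in $B$ has $\indeg_G(v)=2$. Since $G\subseteq G'$ and $N$ is almost-binary, we also get $\indeg_{G'}(v)=2$, so $v$ is a reticulation of $G'$ and lies in $B'$. Hence $B'$ contains at least $r$ reticulations of $G'$, and taking the maximum over blocks gives $\mathrm{level}(G)\le\mathrm{level}(G')$.

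One subtlety needs care: does "contained in a block" count a reticulation $v$ whose two incoming edges both lie outside $B$? If $v\in V(B)$ is a reticulation of $G$ but both of its in-edges are outside $B$, it still lies in $B'$ as a vertex, so the counting above goes through under either convention. Under the edge-based convention, where a reticulation counts for a block if both its in-edges lie in it, the containment $E(B)\subseteq E(B')$ handles the count directly. The main obstacle I expect is stating cleanly the block-containment fact for undirected graphs, namely that any 2-connected subgraph of a subgraph lies within a block of the supergraph, and fixing the definition of reticulations contained in a block. Both are routine once written down.
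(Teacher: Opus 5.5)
Your proposal is correct and follows the same route as the argument behind the cited result. You take a support network attaining $\mathrm{level}^\ast(N)$, pass to a minimal support network inside it, and note that moving to a support subnetwork cannot raise the level: reticulations persist because in-degrees in $N$ are at most two, and each block of the smaller network sits inside a block of the larger one. (Deleting edges one at a time does reach an element of $\mathcal{B}_N$, because $\mathcal{A}_N$ is closed under adding edges of $N$; in any case the finite-poset argument you give first already suffices.)
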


By these results, Suzuki and Hayamizu~\cite{SuzukiEtAl-2025-WhichPhylogeneticNetworks} obtained two algorithms for Problem~\ref{prob:level min}: an exact one (Algorithm~1) that exhaustively searches $\mathcal{B}_N$ (its correctness rests on Proposition~\ref{lem:BN search correctness}), and a heuristic one (Algorithm~2) that exhaustively searches $\mathcal{C}_N$. 
The algorithms run in $O(|E(N)| \cdot |\mathcal{B}_N|)$ time and $O(|E(N)| \cdot |\mathcal{C}_N|)$ time, respectively. Both runtimes are exponential in $|E(N)|$. 
Problem~\ref{prob:find level-1} can also be solved by exhaustively searching $\mathrm{B}_N$ until a support network of level at most one is found, but no algorithm specialized for this problem has been developed so far.

\section{ILP formulations for finding support network of small level}\label{sec:ILP}
In this section, we propose two integer linear programming (ILP) formulations. 
The former exactly solves Problem~\ref{prob:find level-1}, and the latter provides a heuristic solution for Problem~\ref{prob:level min}.
In what follows, let $N = (V,E)$ be a rooted almost-binary phylogenetic network on $X$, let $R = \{r_1, \dots, r_M\}$ be the set of all reticulations of $N$, where $M = |R|$. 
Let $\mathcal{Z} = \{Z_1, \dots, Z_d\}$ be the maximal zig-zag trail decomposition of $N$, where each $Z_i$ is denoted by $(e_{i,1}, \dots, e_{i,m_i})$ using $m_i = |E(Z_i)|$. 

Moreover, we assume without loss of generality that $N$ contains no vertex $v$ with $\indeg_N(v) = \outdeg_N(v) = 2$.
Indeed, given such a vertex $v$, one can replace $v$ with two new vertices $v_{\mathrm{in}}$ and $v_{\mathrm{out}}$ joined by a new edge $(v_{\mathrm{in}}, v_{\mathrm{out}})$, reattaching the two incoming edges of $v$ to $v_{\mathrm{in}}$ and the two outgoing edges of $v$ to $v_{\mathrm{out}}$.
This operation preserves the base level.

\subsection{An exact ILP formulation for Problem~\ref{prob:find level-1}}
We first give a characterization of level-$1$ networks that is better suited for description with inequalities than the definition via the notion of a block. 
\begin{theorem}
  \label{prop:level-1 chara}
  Let $G$ be a rooted almost-binary phylogenetic network on~$X$ without degree-four vertices and let $R$ be the set of all reticulations of $G$.
  Then $\mathrm{level}(G)\leq 1$ holds if and only if there exists a family $\{C_r\}_{r\in R}$ of pairwise vertex-disjoint subgraphs of $G$ such that each $C_r$ is an $r$-cycle of $G$.
\end{theorem}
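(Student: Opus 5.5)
The plan is to prove the two directions separately. Both rest on two standard facts. First, every cycle of $\tilde G$ lies inside a single block. Second, in a rooted DAG every reticulation $r$ lies on some $r$-cycle. For the second fact, let $p_1,p_2$ be the two parents of $r$ and take directed paths $P_1,P_2$ from $\rho$ to $p_1,p_2$. Let $u$ be the last vertex of $P_1$ that also lies on $P_2$. Then the subpaths from $u$ to $r$, each followed by its edge into $r$, are internally disjoint directed paths. Their union is a cycle whose only vertex of in-degree $2$ is $r$, since every other vertex on it has in-degree at most $1$ along the directed paths.

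($\Rightarrow$) Assume $\mathrm{level}(G)\le 1$. For each $r\in R$ I would fix an arbitrary $r$-cycle $C_r$ and let $B_r$ be the block containing it. Because $B_r$ contains $r$ and has at most one reticulation, $B_r\ne B_s$ whenever $r\ne s$. Suppose $C_r$ and $C_s$ share a vertex $v$. Each cycle contributes two edges at $v$, and these edges are distinct because they lie in the distinct blocks $B_r,B_s$. Hence $\deg_G(v)\ge 4$. In an almost-binary network the root has degree at most $2$ and every other vertex has degree at most $4$, so $v$ would be a degree-four vertex, which the hypothesis excludes. So the family $\{C_r\}$ is pairwise vertex-disjoint. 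This is the only place the degree assumption is used.

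($\Leftarrow$) Assume the pairwise vertex-disjoint cycles $\{C_r\}_{r\in R}$ are given. First, $C_r$ contains no reticulation $s\ne r$, because $s\in V(C_s)$ and the cycles are disjoint. For each $r$, delete one of the two in-edges of $r$ in $C_r$, call it $e_r$, and let $T$ be the result. In $T$ every non-root vertex has in-degree exactly $1$. Since $G$ is acyclic, following parents from any vertex reaches $\rho$, so $T$ is a spanning tree of $\tilde G$. Moreover $C_r-e_r\subseteq T$, since the only deleted edges are the $e_s$, and $e_s$ ends at $s\notin V(C_r)$ for $s\neq r$. Hence the fundamental cycle of $e_r$ with respect to $T$ is exactly $C_r$, and $\{C_r\}$ is a basis of the cycle space of $\tilde G$.

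The main step is now to show that every cycle $D$ of $\tilde G$ equals some $C_r$. The edge set of $D$ is the symmetric difference of the $C_r$ with $e_r\in D$. These cycles are pairwise vertex-disjoint, so their symmetric difference is simply their disjoint union. That union is connected only if exactly one cycle is involved, and therefore $D=C_r$ for some $r$.

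Finally, take a non-trivial block $B$. Any two edges of $B$ lie on a common cycle, so $B$ consists of a single cycle $C_r$, as no other cycle exists. The vertices of $B$ are then those of $C_r$, and among them the only reticulation is $r$. Therefore every block contains at most one reticulation, and $\mathrm{level}(G)\le 1$.
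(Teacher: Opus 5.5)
Your proof is correct, but it takes a different route from the paper in both directions.

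For ($\Rightarrow$), the paper uses the degree bound to show that the non-trivial blocks of $G$ are pairwise vertex-disjoint. It then takes those blocks themselves as the family, relying on the fact that a non-trivial block containing a single reticulation $r$ is an $r$-cycle. You instead choose arbitrary $r$-cycles and show that any such choice is automatically disjoint, because blocks partition the edges. You also prove the existence of these cycles explicitly via the two ancestral paths, a fact the paper uses only implicitly. This spares you from proving that a level-$1$ block is a cycle.

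For ($\Leftarrow$), the paper argues by contradiction inside a block $B$ with $k\ge 2$ reticulations. Menger's theorem and the degree bound give two degree-three vertices of $B$ on each $C_{r_i}$, and the handshaking lemma then contradicts $|E(B)|=|V(B)|+k-1$. Your spanning-tree and cycle-space argument is global rather than local: the fundamental cycles of $T$ are exactly the pairwise vertex-disjoint $C_r$, so every cycle of $\tilde G$ is one of them. This gives more than the paper's argument. That direction needs no degree hypothesis at all, and you also obtain that the family $\{C_r\}_{r\in R}$ is unique and coincides with the set of non-trivial blocks.

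The only thing to add is a line on trivial blocks, which your last paragraph skips. A single-edge block $(u,v)$ containing two reticulations would need $v\in R$. But then $(u,v)$ lies on $C_v$, so it is not a bridge. Hence trivial blocks also contain at most one reticulation.
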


\begin{proof}
  Suppose first that $\mathrm{level}(G)\leq 1$. 
Every vertex of a non-trivial block $B$ has degree at least two in $B$, while every vertex of $G$ has degree at most three; hence no vertex of $G$ lies in two non-trivial blocks, and the non-trivial blocks of $G$ are pairwise vertex-disjoint. 
By the assumption on the level, each of them contains exactly one reticulation $r$ of $G$ and is therefore an $r$-cycle. 
Since the two incoming edges of each reticulation lie in a common non-trivial block, the non-trivial blocks of $G$ give the required family.

Conversely, suppose that such a family $\{C_r\}_{r\in R}$ exists and that $\mathrm{level}(G) = k \geq 2$. 
Let $B$ be a block of $G$ containing $k$ reticulations $r_1,\dots, r_k$ of $G$. We note that $|E(B)| = |V(B)| + k - 1$ holds by counting the edges of $B$ by their heads. 
Fix $i\in [1,k]$. 
Since $C_{r_i}$ is biconnected and contains $r_i$, it is a subgraph of $B$. 
Let $u\in V(C_{r_i})$ and let $j\in [1,k]\setminus\{i\}$. 
As $B$ is biconnected, Menger's theorem (Theorem~4.19 in~\cite{chartrand2024graphs})  yields two internally disjoint undirected paths from $u$ to $r_j$ in the underlying undirected graph of $B$. 
Since $r_j\notin V(C_{r_i})$, each of them leaves $C_{r_i}$ at a vertex incident with an edge of $B$ not in $C_{r_i}$, and these two vertices are distinct because every vertex of $C_{r_i}$ is incident with exactly two edges of $C_{r_i}$ and has degree at most three in $G$. 
Hence at least two vertices of $C_{r_i}$ have degree three in $B$. 
As $C_{r_1},\dots, C_{r_k}$ are pairwise vertex-disjoint, $B$ contains at least $2k$ vertices of degree three, and the handshaking lemma gives $|E(B)| \geq \bigl(3\cdot 2k + 2(|V(B)| - 2k)\bigr)/2 = |V(B)| + k$. 
This contradicts the equality obtained above. 
\end{proof}

Here we describe an ILP formulation for finding a support network of $N = (V, E)$ of level at most one:

\begin{formulation}\label{form:level-1 exact}
  \setlength{\abovedisplayskip}{2pt}\setlength{\abovedisplayshortskip}{0pt}%
\setlength{\belowdisplayskip}{2pt}\setlength{\belowdisplayshortskip}{0pt}%
\begin{align}
   \min\ & 0 \notag\\
  \text{s.t.}\ 
  & s_{e_{i,1}} = s_{e_{i,m_i}} = 1 & \forall\  Z_i \in \mathcal{Z}_f \label{eq:fence terminal}\\
  & s_{e_{i,j}} + s_{e_{i,j+1}} \geq 1 & \forall\ Z_i \in \mathcal{Z}_f,\ j\in [1, m_i-1] \label{eq:fence two consec}\\
  & s_{e_{i,j}} + s_{e_{i,j+1}} + s_{e_{i,j+2}} \leq 2 & \forall\ Z_i \in \mathcal{Z}_f,\ j\in [1, m_i-2]\label{eq:fence three consec}\\
  & s_{e_{i,j}} + s_{e_{i,j+1}} \geq 1 & \forall\ Z_i \in \mathcal{Z}_c,\ j\in [1, m_i]\label{eq:crown two consec}\\
  & s_{e_{i,j}} + s_{e_{i,j+1}} + s_{e_{i,j+2}} \leq 2 & \forall\ Z_i \in \mathcal{Z}_c,\ j\in [1, m_i]\label{eq:crown three consec}\\
  & z_e\leq s_e & \forall\  e\in E\label{eq:ze leq se}\\
  & c_v + t_v + p_v \leq 1 & \forall\  v\in V \label{eq:three in one}\\
  & \sum_{e\in \delta^-_N(v)} z_e = c_v + 2t_v & \forall\ v\in V \label{eq:indeg}\\
  & \sum_{e\in \delta^+_N(v)} z_e = c_v + 2p_v & \forall\ v\in V \label{eq:outdeg}\\
  & t_v=\sum_{e\in\delta^-_N(v)}s_e-1 & \forall\ v\in R \label{eq:ret count}\\
  & t_v=0 & \forall\  v\in V\setminus R \label{eq:non-ret}\\
  & |\lambda_u - \lambda_v| \leq M(1-z_{(u,v)}) & \forall\ (u, v)\in E \label  {eq:label carryout}\\
  & |\lambda_{r_i} - i| \leq M(1-t_{r_i}) & \forall\ i \in [1,M]\label{eq:ret label force}\\
  & s_e, t_v, z_e, c_v, p_v\in \{0,1\} & \forall\ e\in E,\ v\in V \notag\\
  & \lambda_{v}\in \mathbb{N}\cap [0,M] & \forall\ v\in V\notag
\end{align}
Note that in constraints~\eqref{eq:crown two consec} and~\eqref{eq:crown three consec}, the second subscripts are taken modulo $m_i$ for each $Z_i$.
\end{formulation}

Here, $s_e$ indicates whether an edge $e$ belongs to a support network $G$ of $N$, and $z_e$ whether $e$ lies in one of the chosen reticulation cycles of $G$. 
Let $E_z = \{e\in E\mid z_e = 1\}$ and $G_z = N[E_z]$. 
For each $v\in V$, the variables $c_v$, $t_v$ and $p_v$ indicate whether $(\indeg_{G_z}(v), \outdeg_{G_z}(v))$ equals $(1,1)$, $(2,0)$ and $(0,2)$, respectively.
Finally, $\lambda_v$ is set to $i$ if $v$ lies on the $r_i$-cycle of $G_z$.

We now explain why Formulation~\ref{form:level-1 exact} correctly finds a level-1 support network of $N$.
Let $(s, t, z, c, p, \lambda)$ be a feasible solution, let
$E' = \{e\in E\mid s_e = 1\}$, and let $G = N[E']$.
Constraints~\eqref{eq:fence terminal}--\eqref{eq:crown three consec} express
\eqref{eq:sequence for minimal support network} exactly, and hence are
equivalent to $G\in\mathcal{B}_N$ by Proposition~\ref{thm:bijection.ABC}.

The remaining constraints make $G_z$ the union of a family $\{C_r\}_{r\in R'}$ as in Theorem~\ref{prop:level-1 chara}, where $R'$ is the set of all reticulations of $G$. 
Constraint~\eqref{eq:ze leq se} ensures $G_z \subseteq G$. 
For $v\in V$, constraint~\eqref{eq:three in one} leaves exactly four possibilities for $(c_v, t_v, p_v)$, namely $(0,0,0)$, $(1,0,0)$, $(0,1,0)$ or $(0,0,1)$, for which constraints~\eqref{eq:indeg} and~\eqref{eq:outdeg} give $(\indeg_{G_z}(v), \outdeg_{G_z}(v)) = (0,0)$, $(1,1)$, $(2,0)$ or $(0,2)$, respectively. 
Hence every vertex of $G_z$ has degree two, so that the underlying undirected graph of each connected component of $G_z$ is a cycle, and $t_v = 1$ holds exactly at the vertices of in-degree two in $G_z$. 
Constraints~\eqref{eq:ret count} and~\eqref{eq:non-ret} force $t_v = 1$ precisely when $v$ is a reticulation of $G$.
Constraints~\eqref{eq:label carryout} and~\eqref{eq:ret label force} ensure that distinct reticulations of $G$ lie on distinct components: the former gives $\lambda_u = \lambda_v$ whenever $(u,v)\in E_z$, so that $\lambda$ is constant on each component of $G_z$, while the latter pins $\lambda_{r_i} = i$ at every reticulation $r_i$ of~$G$. 

Therefore, each component of $G_z$ is an $r$-cycle of $G$ for some reticulation $r$, and these cycles are pairwise vertex-disjoint. By Theorem~\ref{prop:level-1 chara}, $G$ is of level at most one.
Conversely, if $G \in \mathcal{B}_N$ is of level at most one, then Theorem~\ref{prop:level-1 chara} guarantees that such $r$-cycles exist and their union, which is empty when $G$ is a tree, yields a feasible solution. 
Hence $\mathrm{level}^\ast(N)\leq 1$ holds if and only if Formulation~\ref{form:level-1 exact} is feasible, in which case $G$ is a desired support network of $N$.

\subsection{A heuristic ILP formulation for Problem~\ref{prob:level min}}
Let $N$ be a non-tree-based network and let $G\in \mathcal{B}_N$.
By Theorem~\ref{prop:level-1 chara}, an element of $\mathcal{B}_N$ is of level at most one exactly when it has pairwise vertex-disjoint $r$-cycles, one for each of its reticulations $r$. 
Even when no such element of $\mathcal{B}_N$ exists, if we choose an element $G\in\mathcal{B}_N$ and $r$-cycles of $G$ so that they share as few edges as possible, then distinct reticulations tend to lie in distinct blocks of $G$, and hence the level is expected to be small.
For instance, in Figure~\ref{fig:overlap}, the cycles depicted in the level-$3$ support network $G_1$ of $N$ share five edges, whereas those in the level-$2$ support network $G_2$ of $N$ share only two, the minimum over all support networks of $N$ and all choices of their cycles. 
Formulation~\ref{form:level min} below implements this idea and works as a heuristic for Problem~\ref{prob:level min}.

\begin{formulation}\label{form:level min}
    \setlength{\abovedisplayskip}{2pt}\setlength{\abovedisplayshortskip}{0pt}%
\setlength{\belowdisplayskip}{2pt}\setlength{\belowdisplayshortskip}{0pt}%
\begin{align}
   \min\ & \sum_{e\in E} q_e &\label{eq:obj,heur}\\
  \text{s.t.}\ 
  & \eqref{eq:fence terminal}\text{--}\eqref{eq:crown three consec}\notag\\
  & z_{e,r}\leq s_e & \forall\ e\in E,\ r\in R\label{eq:ze leq se,heur}\\
  & c_{v,r} + p_{v,r} + \mathbf{1}_{v=r}t_r \leq 1 & \forall\ v\in V,\ r\in R\label{eq:three in one,heur}\\
  & \sum_{e\in \delta^-_N(v)} z_{e,r} = c_{v,r} + 2\cdot \mathbf{1}_{v=r}t_r & \forall\ v\in V,\ r\in R \label{eq:indeg,heur}\\
  & \sum_{e\in \delta^+_N(v)} z_{e,r} = c_{v,r} + 2p_{v,r} & \forall\ v\in V,\ r\in R\label{eq:outdeg,heur}\\
  & t_r=\sum_{e\in\delta^-_N(r)}s_e-1 & \forall\ r\in R \label{eq:ret count,heur}\\
  & q_e \geq \sum_{r\in R} z_{e, r} -1& \forall\ e\in E\label{eq:count overlap,heur}\\
  & s_e, z_{e,r}\in \{0,1\} & \forall\ e\in E,\ r\in R \notag\\
  & t_r,c_{v,r}, p_{v,r} \in \{0,1\} & \forall\ v\in V,\ r\in R\notag\\
  & q_e\in \mathbb{N}\cap [0,M] & \forall\ e\in E\notag
\end{align}
\end{formulation}
Here, $\mathbf{1}_{v=r}$ represents the indicator function: it returns $1$ if $v=r$ holds, and returns $0$ otherwise. 

In Formulation~\ref{form:level-1 exact}, the variables $z_e$, $c_v$ and $p_v$ describe the single subgraph $G_z$, whose connected components are the $r$-cycles. 
Since we no longer require the disjointness, Formulation~\ref{form:level min} instead keeps track of each $r$-cycle separately by replicating these variables and the constraints on them for every reticulation $r\in R$. 
Indeed, constraints~\eqref{eq:ze leq se,heur}--\eqref{eq:ret count,heur} are copies of~\eqref{eq:ze leq se}--\eqref{eq:ret count}, and hence make the subgraph described by $z_{\cdot,r}$, $c_{\cdot,r}$ and $p_{\cdot,r}$ an $r$-cycle of $G$ for each $r\in R$. 
Constraint~\eqref{eq:count overlap,heur} lets $q_e$ give the lower bound of the number of cycles containing $e$ minus one, and the objective function~\eqref{eq:obj,heur} minimizes the total overlap.

\begin{figure*}[t]
  \centering
  \includegraphics[width=0.8\textwidth]{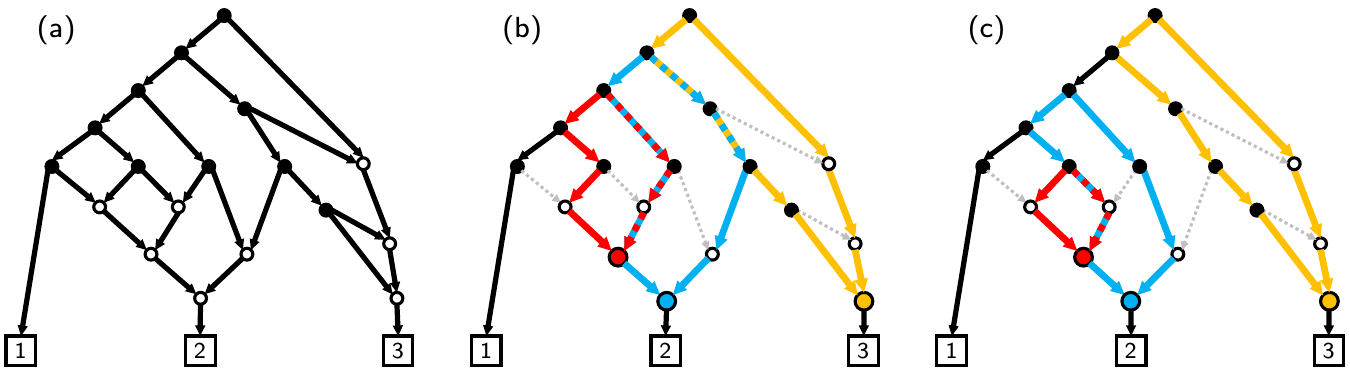}
\caption{(a) A rooted binary phylogenetic network $N$ on $X = \{1,2,3\}$. (b) A level-$3$ support network $G_1$ of $N$. (c) A level-$2$ support network $G_2$ of $N$. 
Support networks are drawn in solid lines, and each chosen $r$-cycle is drawn in the same color as its reticulation $r$. 
}\label{fig:overlap}
\end{figure*}

\section{Experimental Results}\label{sec:exp}
In this section, we apply each of the formulations described in the previous section to randomly generated networks in order to verify their performance. 
All the implementations were coded in Python (version 3.12.2) by relying on Gurobi Optimizer~\cite{gurobi} (version 13.0.2) for ILP.
We refer the runtime of a formulation to the total time spent on preprocessing (i.e.\ computing the maximal zig-zag trail decomposition), building the model and solving it using Gurobi.
All experiments were conducted on a MacBook Pro (2023) with an Apple M3 Max chip, 36\ GB of memory, and macOS Tahoe (version 26.5.2).
The source code, the test data, the program used to generate the data, and the detailed results are available at \url{https://github.com/takatora-suzuki/LevelMinimizationILP}.

\subsection{Experiment 1: scalability of Formulation~\ref{form:level-1 exact}}\label{sec:exp1}
First, we evaluated the scalability of Formulation~\ref{form:level-1 exact} using randomly generated
networks with $8$ leaves.
We generated $100$ rooted binary phylogenetic networks with $n=8$ leaves and $r$ reticulations, for each $r\in\{100,200,\dots,800\}$. Each of them therefore has $3r+14$ edges.
Each instance was generated by first sampling a random level-$2$ network $G$ on $8$ leaves containing $r/5$ level-$1$ blocks and two level-$2$ blocks, and then repeatedly subdividing two randomly chosen edges of a subdivision of $G$, chosen so as not to create a directed cycle, and adding an edge between the two new vertices until the network had $r$ reticulations.

\begin{table*}[t]
\caption{Runtimes for solving Formulation~\ref{form:level-1 exact} on randomly generated rooted binary phylogenetic networks with $8$ leaves and $r \in \{100, 200, \ldots, 800\}$ reticulations. For each $r$, the number of instances and the average, minimum and maximum runtime in seconds are shown separately for YES and NO instances.}
\label{tab:exp1}
\centering
\small
\vspace{1em}
\begin{tabular}{r rrrr rrrr}\hline\hline
 & \multicolumn{4}{c}{YES instances} & \multicolumn{4}{c}{NO instances} \\
\cline{2-5}\cline{6-9}
$r$ & \#\, & avg. & min. & max. & \#\, & avg. & min. & max. \\
\hline
$100$ & 72 & 0.0507 &	0.0314 & 0.1979	& 28 & 0.0415	& 0.0265 & 0.2337\\
$200$ & 56 & 0.2797	& 0.0613 & 1.8659	& 44 & 0.5139	& 0.0495 & 4.8171\\
$300$ & 49 & 1.2586	& 0.1113 & 10.4336 & 51	& 1.4255 & 0.0708	& 8.1404\\
$400$ & 55 & 3.7087	& 0.1678 & 27.7075 & 45	& 3.6673 & 0.1011	& 15.3953 \\
$500$ & 52 & 5.8933	& 0.2074 & 27.2103 & 48 & 6.0797& 0.1229 & 30.0221\\
$600$ & 59 & 11.3215 & 0.5848	& 80.0012 & 41 & 6.4172	& 0.1521 & 43.9449\\
$700$ & 52 & 19.6209 & 0.3261	& 241.2621 & 48 & 21.4806	& 0.1763 & 403.8290\\
$800$ & 65 & 25.6876 & 0.3742	& 417.2249 & 35	& 18.7970	& 0.1946 & 187.7414\\
\hline
\end{tabular}
\end{table*}

Table~\ref{tab:exp1} shows the average, minimum and maximum runtime of Formulation~\ref{form:level-1 exact}, separately for the YES and the NO instances. 
Both classes occurred in every setting, with $49$--$72$ YES instances, and even networks with $r=800$ reticulations were solved in less than 26 seconds on average. 
A log-log regression shows that the mean runtime grows in proportion to $|E(N)|^{\alpha}$ with $\alpha\approx 3.15$ for YES instances and $\alpha\approx 2.98$ for NO instances ($R^2 > 0.98$ in both cases). 
Thus, although Problem~\ref{prob:find level-1} is NP-complete, Formulation~\ref{form:level-1 exact} exhibited polynomial growth on this particular family of instances.
The minimum runtime over the NO instances grew roughly linearly with $r$, indicating that a fair number of NO instances were resolved almost as soon as the model was built.

\subsection{Experiment 2: performance of Formulation~\ref{form:level min}}\label{sec:exp2}
We evaluated Formulation~\ref{form:level min} by comparing it with Algorithm~2 of~\cite{SuzukiEtAl-2025-WhichPhylogeneticNetworks}. We reused the dataset on which the algorithm was evaluated in~\cite{SuzukiEtAl-2025-WhichPhylogeneticNetworks}, which consists of randomly generated rooted binary phylogenetic networks with $8$ leaves and $r\in\{1,\dots,36\}$ reticulations, with $25$ networks per setting.
Their true base levels, computed by the exact algorithm in~\cite{SuzukiEtAl-2025-WhichPhylogeneticNetworks}, are also available.
Among them we used only the non-tree-based networks, since tree-based networks can be recognized in linear time by Theorem~\ref{thm:zig-zag trail decomposition} and Proposition~\ref{prop:tree-based}. 
All $25$ networks were tree-based for $r\in\{1,2,3\}$, and these settings were therefore excluded.
On these inputs, we compared the runtimes of the two methods and the accuracy of the base level they estimated.

\begin{figure}[h]
  \centering
  \includegraphics[width=0.8\textwidth]{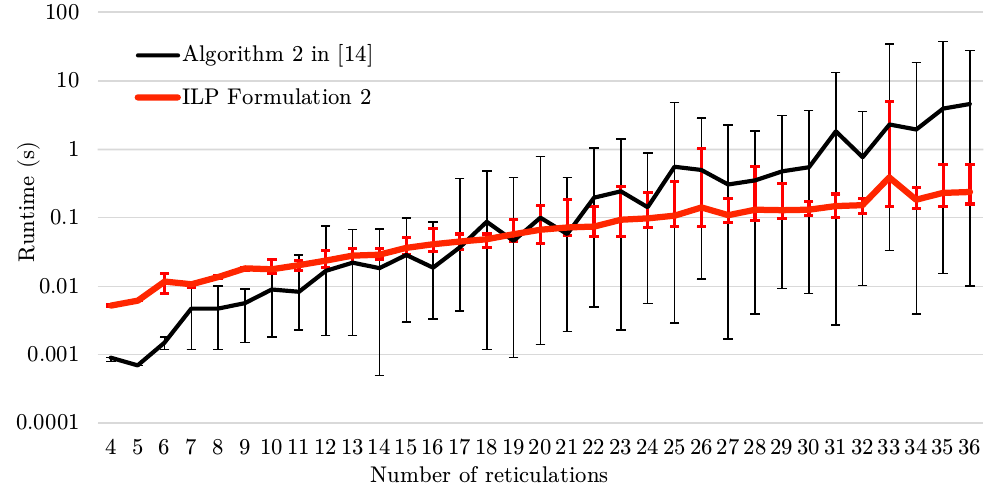}
  \caption{Runtimes of Formulation~\ref{form:level min} (red line) and Algorithm 2 in~\cite{SuzukiEtAl-2025-WhichPhylogeneticNetworks} (black line) on the non-tree-based networks of the dataset of~\cite{SuzukiEtAl-2025-WhichPhylogeneticNetworks}. Runtime is averaged over all samples in each setting, and error bar indicates the minimum and the maximum.}
  \label{fig:exp2 runtime}
\end{figure}

Figure~\ref{fig:exp2 runtime} compares the average, minimum, and maximum runtime of Formulation~\ref{form:level min} and Algorithm~2 of~\cite{SuzukiEtAl-2025-WhichPhylogeneticNetworks}. 
For $r$ up to around $20$, Formulation~\ref{form:level min} was slower than Algorithm~2, but this relation was reversed for larger $r$: around $r=35$, Formulation~\ref{form:level min} was more than ten times faster.
The two methods also differ markedly in the stability of their runtimes. 
The size of $\mathcal{C}_N$ can differ by orders of magnitude between networks with the same number of reticulations, so the cost of the exhaustive search of Algorithm~2 varies accordingly. 
In Formulation~\ref{form:level min}, by contrast, the number of constraints is essentially independent of the decomposition, as it is determined by $|V(N)|$, $|E(N)|$ and $|R|$ except for the number of constraints~\eqref{eq:fence terminal}--\eqref{eq:crown three consec}, which equals $2|E(N)| - |\mathcal{Z}_f|$. 
While the runtime of an integer program is not determined by its size alone, the runtimes we observed varied little among networks with the same number of reticulations.

\begin{figure}[h]
  \centering
  \includegraphics[width=0.7\textwidth]{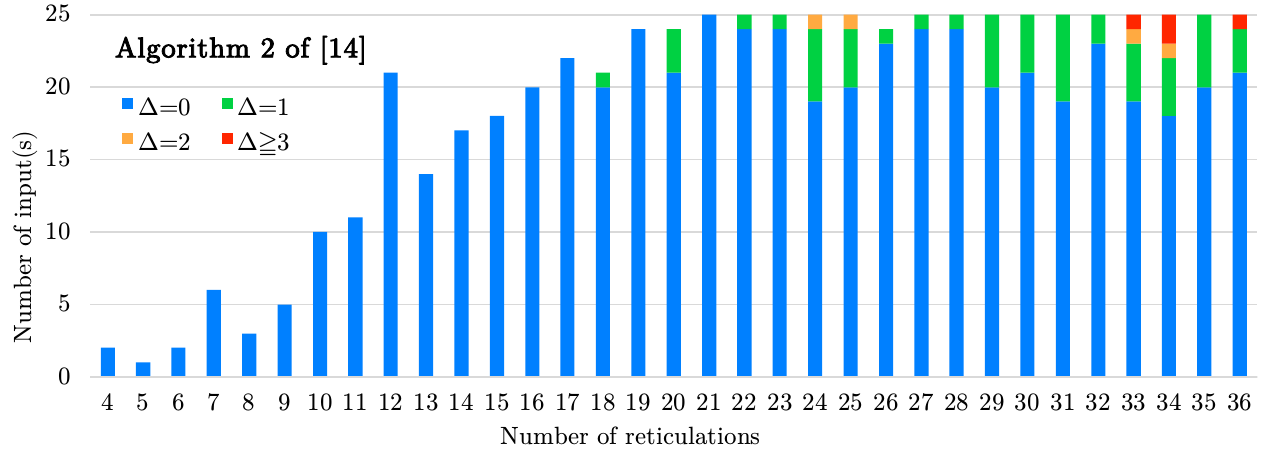}\\
  \includegraphics[width=0.7\textwidth]{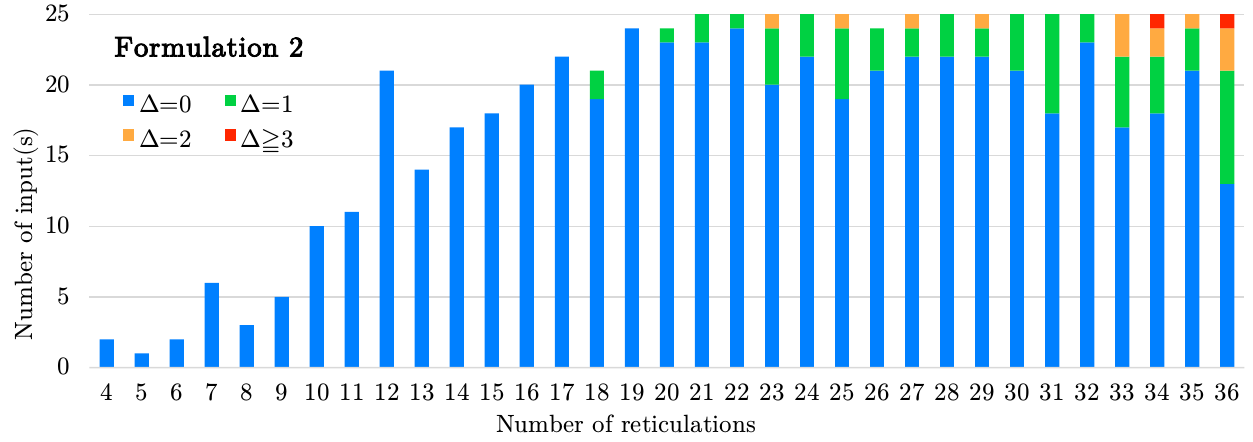}
  \caption{Accuracy of Algorithm~2 of~\cite{SuzukiEtAl-2025-WhichPhylogeneticNetworks} (top) and accuracy of Formulation~\ref{form:level min} (bottom) on the non-tree-based networks of the dataset of~\cite{SuzukiEtAl-2025-WhichPhylogeneticNetworks}. 
  In each figure, each bar shows the number of inputs with difference $\Delta$ between the output and the actual base level: $\Delta = 0$ (blue), $\Delta = 1$ (green), $\Delta = 2$ (orange), $\Delta \geq 3$ (red).}
  \label{fig:exp2 acc}
\end{figure}

Figure~\ref{fig:exp2 acc} shows the deviation $\Delta$, the difference between the level of the support network returned by each method and the base level of the input.
Formulation~\ref{form:level min} attained the base level slightly less often than Algorithm~2 of~\cite{SuzukiEtAl-2025-WhichPhylogeneticNetworks}, and the difference was largest at $r=36$, where the two methods were exact for $13$ inputs and $21$ inputs, respectively. Even there, Formulation~\ref{form:level min}
still returned a support network of minimum level for more than half of the inputs. 
Moreover, the number of inputs with $\Delta\geq 3$ was reduced from four to two compared with Algorithm~2 of~\cite{SuzukiEtAl-2025-WhichPhylogeneticNetworks}, so that the loss in accuracy was confined to small deviations.

\section{Case study: application to phylogenetic\\ networks inferred from biological data}\label{sec:exp3}
The purpose of this case study is twofold: to confirm that the proposed methods remain fast and accurate on real data, and to illustrate that the base level is informative for analyzing the complexity of an inferred network.
We applied our methods to the two ARGs shown in Figs.~4(a) and (b) of~\cite{wong2024general}, which we call $N_{Kw}$ and $N_{weaver}$, respectively.
Both were inferred from the same dataset of $11$ \textit{Drosophila melanogaster} sequences at the alcohol dehydrogenase (\textit{Adh}) locus~\cite{kreitman1983nucleotide}, using KwARG~\cite{ignatieva2021kwarg} and ARGweaver~\cite{rasmussen2014genome}, respectively, and converted to the \texttt{tskit} graph-ARG representation. 
Ignoring the inheritance intervals carried by the edges, we treated both as rooted binary phylogenetic networks, and computed their support networks using the two algorithms of~\cite{SuzukiEtAl-2025-WhichPhylogeneticNetworks} and the two methods proposed in Section~\ref{sec:ILP}.
The implementations and the computational environment were the same as in Section~\ref{sec:exp}.

We first compare the runtimes in Table~\ref{tab:arg}. For $N_{Kw}$, Formulation~\ref{form:level-1 exact} performed best: it returned a level-1 support network, shown in Figure~\ref{fig:arg-a}, in $0.0042$ seconds, about half the time taken by either algorithm of~\cite{SuzukiEtAl-2025-WhichPhylogeneticNetworks}. 
For $N_{weaver}$, Formulation~\ref{form:level-1 exact} could only detect that 
its base level is more than one, but Formulation~\ref{form:level min} returned a level-$6$ support network, shown in Figure~\ref{fig:arg-b}, more than $1800$ times faster than the exact algorithm and about $8$ times faster than the heuristic of~\cite{SuzukiEtAl-2025-WhichPhylogeneticNetworks}. 
The base levels of $N_{Kw}$ and $N_{weaver}$ were $1$ and $6$, respectively, and both heuristics attained these values.

\begin{table*}[t]
\caption{Output base level and runtimes of each method for two rooted binary phylogenetic networks $N_{Kw}$ and $N_{weaver}$. 
For Formulation~\ref{form:level-1 exact}, ``YES'' indicates that there exists a level-1 support network, and ``NO'' indicates otherwise.
}
\label{tab:arg}
\centering
\scriptsize
\vspace{1em}
\begin{tabular}{@{}l rr rr rr rr@{}}\hline\hline
& \multicolumn{2}{c}{Algorithm~1 of~\cite{SuzukiEtAl-2025-WhichPhylogeneticNetworks}} & \multicolumn{2}{c}{Algorithm~2 of~\cite{SuzukiEtAl-2025-WhichPhylogeneticNetworks}}
& \multicolumn{2}{c}{Formulation~\ref{form:level-1 exact}} & \multicolumn{2}{c}{Formulation~\ref{form:level min}} \\
\cmidrule(lr){2-3}\cmidrule(lr){4-5}\cmidrule(lr){6-7}\cmidrule(lr){8-9}
Input network & base level & runtime & base level & runtime & base level~$\leq 1$ & runtime & base level & runtime \\
\midrule
$N_{Kw}$ (Fig.~4(a) in~\cite{wong2024general})    & 1 &   0.0079 & 1 & 0.0081 & YES & 0.0042 & 1 & 0.0107 \\
$N_{weaver}$ (Fig.~4(b) in~\cite{wong2024general}) & 6 & 292.8154 & 6 & 1.3915 & NO & 0.0113 & 6 & 0.1610 \\
\bottomrule
\end{tabular}
\end{table*}

\begin{figure}[t]
  \centering
  \includegraphics[width=0.7\textwidth]{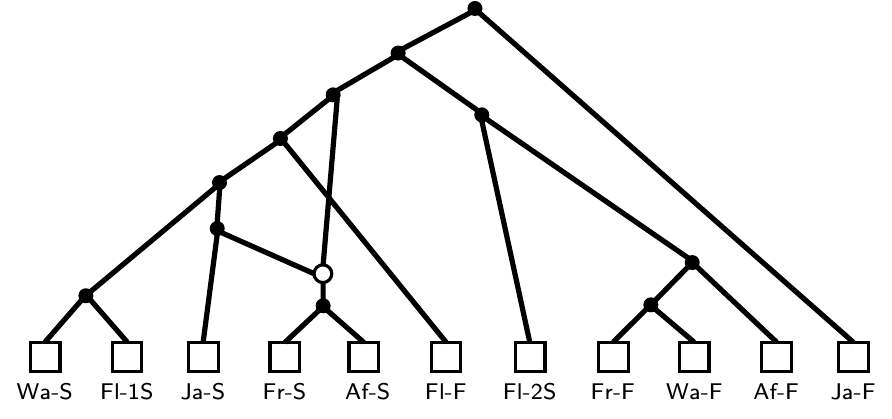}
  \caption{An output support network~$G$ of Formulation~\ref{form:level-1 exact} for the input network $N_{Kw}$ shown in Fig.~4(a) of~\cite{wong2024general}. Each edge is directed downwards, and every vertex of in-degree and out-degree one is smoothed for readability.
  }\label{fig:arg-a}
  \vspace{0.5em}
  \includegraphics[width=0.7\textwidth]{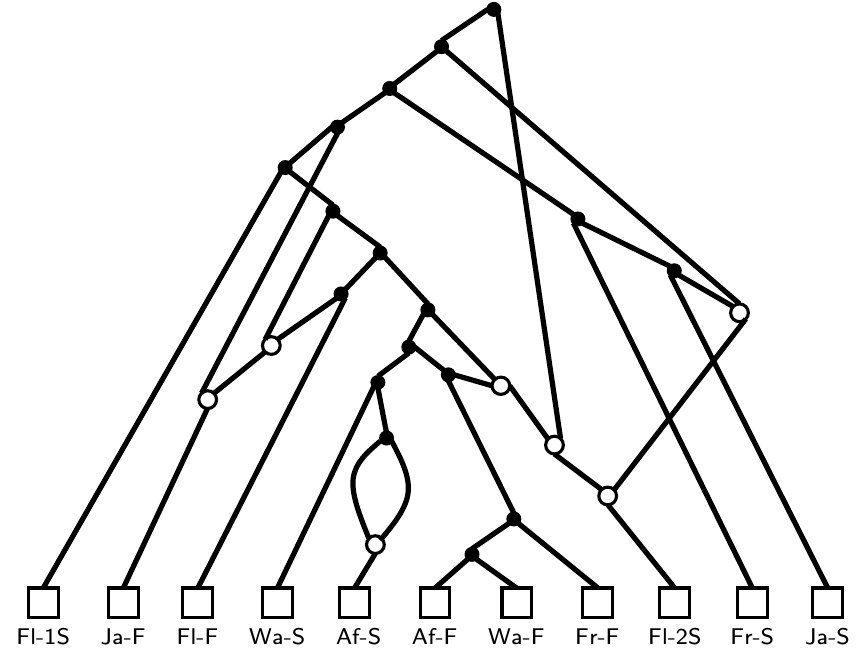}
  \caption{An output support network~$G$ of Formulation~\ref{form:level min} for the input network~$N_{weaver}$ shown in Fig.~4(b) of~\cite{wong2024general}. Each edge is directed downwards, and every vertex of in-degree and out-degree one is smoothed for readability.
  }\label{fig:arg-b}
\end{figure}

Wong et al.~\cite{wong2024general} pointed out that the complexity of the network estimated from the same data varies with the method used, and this is also captured by the base level: $N_{Kw}$ and $N_{weaver}$ contain $7$ and $37$ reticulations and have base levels $1$ and $6$, respectively. 
This suggests that the base level can be used as an alternative measure of the complexity of an inferred network. 
We note, however, that a support network of minimum level need not be unique, so that the networks shown in Figure~\ref{fig:arg-a} and Figure~\ref{fig:arg-b} cannot be claimed to represent the true evolutionary history.

It is also worth noting that the base level of $N_{weaver}$ is six, while Song and Hein~\cite{song2003parsimonious} showed that at least seven recombination events are required to explain the \textit{Adh} data under the infinite-sites model. 
Both quantities measure how far the data are from being explained by a single tree, one in terms of the entanglement of the reticulations in a topology and the other in terms of the number of recombination events in a history. 
The two cannot be identified with each other, since our methods use only the topology of an ARG, but the closeness of these values suggests a possible relationship between the base level of an inferred network and the parsimonious number of recombination events for the underlying data.

\section{Conclusions}\label{sec:conclusion}
In this paper, we have approached \textsc{Level Minimization} using integer linear programming.
We proposed an exact formulation for finding a support network of level at most one (Formulation~\ref{form:level-1 exact}), based on extracting pairwise vertex-disjoint reticulation cycles (Theorem~\ref{prop:level-1 chara}). 
By relaxing this disjointness, we also proposed a heuristic for \textsc{Level Minimization} (Formulation~\ref{form:level min}).
Our experiments showed that the runtime of Formulation~\ref{form:level-1 exact} grew polynomially with the size of the networks, and that Formulation~\ref{form:level min} ran faster on larger inputs than the heuristic of~\cite{SuzukiEtAl-2025-WhichPhylogeneticNetworks} at a slight cost in accuracy.
Furthermore, the runtime of Formulation~\ref{form:level min} was more stable among networks of the same size.
We note that the two ranges of $r$ used in Section~\ref{sec:exp} were determined by the design of the experiments, not by any limitation of the two formulations. 
An application to ARGs of \textit{Drosophila melanogaster}~\cite{wong2024general} illustrated that the base level can capture the difference in topological complexity between networks inferred from the same empirical data.

Several directions remain open. 
Formulation~\ref{form:level-1 exact} is stated for rooted almost-binary phylogenetic networks, and extending it to networks with vertices of degree four or more would widen its applicability. 
For Formulation~\ref{form:level min}, designing an objective function that better reflects the level of the resulting support network would improve its accuracy. An exact method for computing the base level would be more valuable still, but extending Formulation~\ref{form:level-1 exact} in this direction does not appear straightforward, as its constraints rely on the degree conditions of level-$1$ blocks and these do not readily generalize to level-$k$ blocks. 
Finally, on the biological side, our methods use only the topology of an ARG; incorporating the inheritance intervals carried by its edges would be necessary for the extracted support networks to be interpreted as evolutionary histories.

\section*{Acknowledgements}
The author thanks 
Martin Frohn for helpful discussions about ILP formulations and 
Momoko Hayamizu for helpful comments about biological interpretations. 
This work is part of the outcome of research performed under a Waseda University Grant for Special Research Projects (Project number: 2026C-088).

\bibliographystyle{unsrt}
\bibliography{takatora-counting}

@article{FrancisSteel-2015-WhichPhylogeneticNetworks,
  title = {Which {{Phylogenetic Networks}} Are {{Merely Trees}} with {{Additional Arcs}}?},
  author = {Francis, Andrew and Steel, Mike},
  year = {2015},
  journal = {Systematic Biology},
  volume = {64},
  number = {5},
  pages = {768--777},
  issn = {1063-5157, 1076-836X},
  doi = {10.1093/sysbio/syv037},
  urldate = {2024-11-26},
  langid = {english}
}

@article{Hayamizu-2021-StructureTheoremRooted,
  title = {A {{Structure Theorem}} for {{Rooted Binary Phylogenetic Networks}} and {{Its Implications}} for {{Tree-Based Networks}}},
  author = {Hayamizu, Momoko},
  year = {2021},
  journal = {SIAM Journal on Discrete Mathematics},
  volume = {35},
  number = {4},
  pages = {2490--2516},
  issn = {0895-4801, 1095-7146},
  urldate = {2024-11-25},
  langid = {english}
}

@article{Zhang-2016-TreeBasedPhylogeneticNetworks,
  title = {On {{Tree-Based Phylogenetic Networks}}},
  author = {Zhang, Louxin},
  year = {2016},
  journal = {Journal of Computational Biology},
  volume = {23},
  number = {7},
  pages = {553--565},
  issn = {1066-5277, 1557-8666},
  doi = {10.1089/cmb.2015.0228},
  urldate = {2025-05-07},
  copyright = {http://www.liebertpub.com/nv/resources-tools/text-and-data-mining-policy/121/},
  langid = {english}
}

@InProceedings{SuzukiEtAl-2025-WhichPhylogeneticNetworks,
  author =	{Suzuki, Takatora and Hayamizu, Momoko},
  title =	{{Which Phylogenetic Networks Are Level-$k$ Networks with Additional Arcs? Structure and Algorithms}},
  booktitle =	{25th International Conference on Algorithms for Bioinformatics (WABI 2025)},
  pages =	{19:1--19:19},
  ISBN =	{978-3-95977-386-7},
  ISSN =	{1868-8969},
  year =	{2025},
  volume =	{344},
  comment =	{Brejov\'{a}, Bro\v{n}a and Patro, Rob},
  publisher =	{Schloss Dagstuhl -- Leibniz-Zentrum f{\"u}r Informatik},
  address =	{Dagstuhl, Germany},
  URN =		{urn:nbn:de:0030-drops-239454},
  doi =		{10.4230/LIPIcs.WABI.2025.19}
}

@article{CHOY200593,
title = {Computing the maximum agreement of phylogenetic networks},
journal = {Theoretical Computer Science},
volume = {335},
number = {1},
pages = {93-107},
year = {2005},
issn = {0304-3975},
author = {Charles Choy and Jesper Jansson and Kunihiko Sadakane and Wing-Kin Sung}
}

@ARTICLE{Non-binary-tree-based,
  author={Jetten, Laura and van Iersel, Leo},
  journal={IEEE/ACM Transactions on Computational Biology and Bioinformatics}, 
  title={{N}onbinary {T}ree-{B}ased {P}hylogenetic {N}etworks}, 
  year={2018},
  volume={15},
  number={1},
  pages={205-217}
  }

@article{BAPTESTE2013439,
title = {Networks: expanding evolutionary thinking},
journal = {Trends in Genetics},
volume = {29},
number = {8},
pages = {439-441},
year = {2013},
issn = {0168-9525},
author = {Eric Bapteste and Leo {van Iersel} and Axel Janke and Scot Kelchner and Steven Kelk and James O. McInerney and David A. Morrison and Luay Nakhleh and Mike Steel and Leen Stougie and James Whitfield}
}

@article{szollHosi2015genome,
  title={Genome-scale phylogenetic analysis finds extensive gene transfer among fungi},
  author={Sz{\"o}ll{\H{o}}si, Gergely J and Dav{\'\i}n, Adri{\'a}n Arellano and Tannier, Eric and Daubin, Vincent and Boussau, Bastien},
  journal={Philosophical Transactions of the Royal Society B: Biological Sciences},
  volume={370},
  number={1678},
  year={2015},
  publisher={The Royal Society}
}

@article{goulet2017hybridization,
  title={{Hybridization} in {Plants}: {Old} {Ideas}, {New} {Techniques}},
  author={Goulet, Benjamin E and Roda, Federico and Hopkins, Robin},
  journal={Plant Physiology},
  volume={173},
  number={1},
  pages={65--78},
  year={2017},
  publisher={American Society of Plant Biologists}
}

@article{kong2022classes,
  title={Classes of explicit phylogenetic networks and their biological and mathematical significance},
  author={Kong, Sungsik and Pons, Joan Carles and Kubatko, Laura and Wicke, Kristina},
  journal={Journal of Mathematical Biology},
  volume={84},
  number={6},
  pages={47},
  year={2022},
  publisher={Springer}
}

@article{fss2018,
title = {New characterisations of tree-based networks and proximity measures},
journal = {Advances in Applied Mathematics},
volume = {93},
pages = {93-107},
year = {2018},
issn = {0196-8858},
doi = {https://doi.org/10.1016/j.aam.2017.08.003},
author = {Andrew Francis and Charles Semple and Mike Steel}
}

@article{francis2018tree,
  title={{T}ree-{B}ased {U}nrooted {P}hylogenetic {N}etworks},
  author={Francis, Andrew and Huber, Katharina T and Moulton, Vincent},
  journal={Bulletin of Mathematical Biology},
  volume={80},
  number={2},
  pages={404--416},
  year={2018},
  publisher={Springer}
}

@book{chartrand2024graphs,
  title={Graphs \& Digraphs},
  author={Chartrand, Gary and Jordon, Heather and Vatter, Vincent and Zhang, Ping},
  year={2024},
  edition={7th},
  publisher={Chapman and Hall/CRC}
}

@article{suzuki2026hardness,
  title={Recognizing Level-$k$-Based Phylogenetic Networks is NP-Complete},
  author={Suzuki, Takatora},
  journal={arXiv preprint arXiv:2605.26852},
  year={2026}
}

@article{wong2024general,
  title={A general and efficient representation of ancestral recombination graphs},
  author={Wong, Yan and Ignatieva, Anastasia and Koskela, Jere and Gorjanc, Gregor and Wohns, Anthony W and Kelleher, Jerome},
  journal={Genetics},
  volume={228},
  number={1},
  eid={iyae100},
  year={2024},
  publisher={Oxford University Press US}
}

@article{kreitman1983nucleotide,
  title={Nucleotide polymorphism at the alcohol dehydrogenase locus of \textit{Drosophila melanogaster}},
  author={Kreitman, Martin},
  journal={Nature},
  volume={304},
  number={5925},
  pages={412--417},
  year={1983},
  publisher={Nature Publishing Group UK London}
}

@article{ignatieva2021kwarg,
  title={KwARG: parsimonious reconstruction of ancestral recombination graphs with recurrent mutation},
  author={Ignatieva, Anastasia and Lyngs{\o}, Rune B and Jenkins, Paul A and Hein, Jotun},
  journal={Bioinformatics},
  volume={37},
  number={19},
  pages={3277--3284},
  year={2021},
  publisher={Oxford University Press}
}

@article{rasmussen2014genome,
  title={Genome-Wide Inference of Ancestral Recombination Graphs},
  author={Rasmussen, Matthew D and Hubisz, Melissa J and Gronau, Ilan and Siepel, Adam},
  journal={PLoS genetics},
  volume={10},
  number={5},
  eid={e1004342},
  year={2014},
  publisher={Public Library of Science San Francisco, USA}
}

@misc{gurobi,
  author = {{Gurobi Optimization, LLC}},
  title = {{Gurobi Optimizer Reference Manual}},
  year = 2026,
  url = "https://www.gurobi.com"
}

@InProceedings{song2003parsimonious,
author="Song, Yun S.
and Hein, Jotun",
editor="Benson, Gary
and Page, Roderic D. M.",
title="Parsimonious Reconstruction of Sequence Evolution and Haplotype Blocks",
booktitle="Algorithms in Bioinformatics",
year="2003",
publisher="Springer Berlin Heidelberg",
address="Berlin, Heidelberg",
pages="287--302",
isbn="978-3-540-39763-2"
}

@article{huson2011survey,
  title={A Survey of Combinatorial Methods for Phylogenetic Networks},
  author={Huson, Daniel H. and Scornavacca, Celine},
  journal={Genome Biology and Evolution},
  volume={3},
  pages={23--35},
  year={2011},
  publisher={Oxford University Press}
}

@article{van2009uniqueness,
  title={Uniqueness, intractability and exact algorithms: reflections on level-k phylogenetic networks},
  author={van Iersel, Leo and Kelk, Steven and Mnich, Matthias},
  journal={Journal of bioinformatics and computational biology},
  volume={7},
  number={04},
  pages={597--623},
  year={2009},
  publisher={World Scientific},
eprint = {https://doi.org/10.1142/S0219720009004308}
}

\end{document}